\newtheorem{theorem}{Theorem}[section]
\newtheorem{lemma}[theorem]{Lemma}
\newenvironment{proof}[1][Proof]{\begin{trivlist}
  \item[\hskip \labelsep {\bfseries #1}]}{\end{trivlist}}
\newcommand{\qed}{\nobreak \ifvmode \relax \else
         \ifdim\lastskip<1.5em \hskip-\lastskip
         \hskip1.5em plus0em minus0.5em \fi \nobreak
         \vrule height0.75em width0.5em depth0.25em\fi}
\begin{document}



\title{Variable and Thread Bounding for Systematic Testing of Multithreaded Programs}

\authorinfo{Sandeep Bindal\and Sorav Bansal}
           {IIT Delhi}
           {\{cs5080536, sbansal\}@cse.iitd.ernet.in}
\authorinfo{Akash Lal}
           {Microsoft Research}
           {akashl@microsoft.com}

\maketitle

\begin{abstract}
  Previous approaches to systematic state-space exploration for testing multi-threaded
  programs have proposed context-bounding~\cite{musuvathi:icb:pldi07} and
  depth-bounding~\cite{pct} to be effective ranking algorithms
  for testing multithreaded programs. This paper proposes two new metrics
  to rank thread schedules for systematic state-space exploration. Our metrics
  are based on characterization of a concurrency bug using $v$ (the minimum number of
  distinct variables that need to be involved for the bug to manifest) and
  $t$ (the minimum number of distinct threads among which scheduling constraints are
  required to manifest the bug).
  Our algorithm is based on the hypothesis
  that in practice, most concurrency bugs have low $v$ (typically 1-2)
  and low $t$ (typically 2-4) characteristics.
  We iteratively explore the search space of schedules 
  in increasing orders of $v$ and $t$. We show qualitatively and empirically that
  our algorithm finds common bugs in fewer
  number of execution runs, compared with previous approaches. We also show
  that using $v$ and $t$
  improves the lower bounds on the probability of finding bugs
  through randomized algorithms.
  
  Systematic exploration of schedules requires instrumenting each
  variable access made by a program, which can be very expensive
  and severely limits the applicability of this
  approach. Previous work \cite{musuvathi:icb:pldi07, pct} has
  avoided this problem by interposing only on
  synchronization operations (and ignoring other variable accesses). We demonstrate
  that by using variable bounding ($v$) and a
  static imprecise alias analysis, we can interpose on all variable
  accesses (and not just synchronization operations) at 10-100x less
  overhead than previous approaches.
\end{abstract}

\category{D.2.4}{Software Engineering}{Software/Program Verification --- formal methods, validation}
\category{F.3.1}{Logics and Meanings of Programs}{Specifying and Verifying and Reasoning about Programs --- mechanical verification, specification techniques}
\category{D.2.5}{Software Engineering}{Testing and Debugging --- debugging aids, diagnostics, monitors, tracing}

\terms
Algorithms, Reliability, Verification

\keywords
Concurrency, context-bounding, variable-bounding, thread-bounding, model checking, multi-threading, concurrency-bug classification, shared-memory programs, software testing

\section{Introduction}
\label{sec:intro}
Testing concurrent programs is notoriously
difficult because of its inherent non-determinism. An
effective but expensive approach is {\em model-checking}, where all
possible schedules of a program are executed to ascertain the absence
of a bug.
Unfortunately, the space
of all schedules is huge, and exhaustively enumerating it
is usually infeasible. For a multi-threaded program with $n$ threads,
each executing $k$ instructions, the total number
of schedules (or thread interleavings)
is $\frac{(nk)!}{(k!)^n}$. This space of schedules further
explodes if each instruction is not guaranteed to be
atomic.
For a very small program with $k = 100$ and $n = 2$, the total
number of interleavings is around $10^{59}$!

As it is practically impossible to exhaustively explore the entire state
space of all schedules for any useful program, an alternative is to try
and maximize the probability of uncovering a bug rather than trying
to ascertain its absence.
Many different approaches have
been proposed in this direction. Musuvathi and Qadeer proposed using
{\em context-bound} to rank schedules, and show that it is an effective
method to uncover most common bugs \cite{musuvathi:icb:pldi07}.
A context-bound is the number of pre-emptive context-switches required
to execute a schedule. The schedules are enumerated in increasing
order of their context-bound, i.e., all schedules with context bound $c-1$ are
executed before any schedule with context bound $c$.
Musuvathi and Qadeer report experiments on real-world applications, and
show that all known bugs in those applications were found at context-bound
values of 2 or less.

Iterative context bounding is an effective way of ranking
schedules. However, this metric is often too coarse-grained.
For a
multi-threaded program with $n$ threads, each executing $k$ instructions, the
total number of schedules at context-bound $c$ grows with
$(nk)^{c}$.
For a small program with $k=10,000$ instructions and $n=4$, the number
of schedules at context bound $2$ is on the order of $10^{9}$!
Musuvathi et. al's concurrency-testing tool based on this algorithm, CHESS, reduces
this search space
by considering only explicit synchronization operations
as possible pre-emption points, thus reducing $k$ by at least 2-3 orders
of magnitude. This simplification is justified by the assumption
that most programs follow a mutual-exclusion locking discipline, and
hence all shared-memory accesses will be protected by {\tt lock()}
and {\tt unlock()} calls. Violation of this locking discipline can
be separately checked using other race-detection tools.
This approach, though effective, is not completely general, as
many systems deliberately avoid explicit
synchronization \cite{xiong:adhoc:osdi10}, often for performance reasons.

Another approach to testing multithreaded programs is randomization of scheduling
decisions with probabilistic guarantees. Burckhardt et. al. \cite{pct}
characterize a concurrency bug by its \emph{depth}---the minimum number of scheduling
constraints required to find the bug. They provide an algorithm that provides a lower bound
on the probability of finding a depth-$d$ bug. Ranking on bug-depth $d$ restricts
the search space of a multi-threaded program with $n$ threads and executing
$k$ instructions to $nk^{d-1}$. This, again, may be too large for most programs.

Another recent tool, CTrigger \cite{ctrigger}, focuses on atomicity-violation bugs and preferentially
searches the space of schedules that are likely to trigger these bugs.
CTrigger first profiles executions of the program to determine
the shared variables and their unprotected accesses. It then attempts
to generate schedules that are likely to violate assumptions
of atomicity (for example, by inserting a write to location $M$
by some thread between two accesses to the same location $M$ by another
thread).
CTrigger is primarily interested in atomicity-violation bugs and often
overlooks other concurrency bugs.

Our first contribution is to propose the use of number of variables to further
classify and reduce the schedule search space. Our algorithm is based on the
hypothesis that in practice, most concurrency bugs can be uncovered by restricting
our search to only a few variables at a time. At a time, we only search for bugs
involving a small subset of $v$ variables. These variables may include synchronization
operations. Iteratively, we consider all such variable subsets.
For a given subset of variables, we perform static alias analysis
to identify all program locations where these variables may be accessed. We
instrument only these program locations. This selective instrumentation allows
us to run our program at near-native speed. Consequently, our approach can interpose
on any variable accesses, and not just synchronization variables as reported in previous
work. We show that using variable bounding, the search space reduces by
a factor of roughly $(\frac{Q}{v})^{c-v}$ when searching for bugs with
context-bound
$c$ and variable bound $v$, where $Q$ is the total number of variables in
the program. We confirm this result
experimentally by showing that variable bounding allows faster discovery
of concurrency bugs.

Our second contribution is characterizing a concurrency bug by
the number of distinct threads that need to be order-constrained to uncover the bug.
A bug that can be uncovered by constraining the order of $t$ threads is called
a $t$-thread bug.
In practice, most bugs have a small $t$.
We provide a randomized algorithm with
guarantees on the probability of uncovering a $t$-thread bug, if it exists.
Using thread-bounding, the search space decreases by a factor
of $\frac{n!}{(t+1)!log(n)}$ when searching for bugs with thread-bound $t$ out
of a total of $n$ program threads.

We note that our hypothesis that most bugs can be uncovered at
low $(v,t)$ values conform
with the observations made in previous work on studying real-world
concurrency bug characteristics \cite{lu:learning_from_mistakes:asplos08}.

The paper is organized as follows. Section~\ref{sec:varbound} presents
and analyzes variable bounding for exhaustive model-checking algorithms.
Section~\ref{sec:varbound-random} discusses variable bounding for randomized
algorithms and analyzes the resulting probabilistic guarantees of finding a bug,
if one exists. Section~\ref{sec:threadbound} discusses thread bounding.
Sections~\ref{sec:implementation}~and~\ref{sec:results} discuss our implementation
and empirical results. Section~\ref{sec:relwork} discusses related work,
and Section~\ref{sec:conclusion} concludes.

\section{Variable Bounding}
\label{sec:varbound}
Recent work on studying characteristics of real-world
concurrency bugs~\cite{lu:learning_from_mistakes:asplos08} concluded that 66\% of
the non-deadlock concurrency bugs they examined involved only one variable.
Perhaps, the most common type of concurrency bug involving one variable
access is a data race. i.e., simultaneous access of a shared
variable (of which, one is a write) by two or more threads without proper
synchronization.
Also, among the remaining fraction of
non-deadlock concurrency bugs, most bugs involve only a few
variables (typically 2 to 3).
This observation
motivates our ranking on the number of memory locations involved. We
first enumerate schedules
that exhaustively check all thread interactions involving a single
variable. We then enumerate schedules that exhaustively check thread
interactions involving two variables, and so on.

We first discuss variable bounding in the context of a model-checker.
For a model-checker like CHESS~\cite{chess}, a custom priority scheduler
implements the exhaustive enumeration of schedules, and
context-bounding~\cite{musuvathi:icb:pldi07} is used to limit the number of
schedules executed.
To implement variable bounding, we first identify all program
variables (or points in the program that generate new variables) by parsing
the program.
These program variables include globals and heap-allocated
variables (allocated using {\tt malloc()} or {\tt new}).
A heap variable is identified and named
by its allocation statement and the number of times that
statement has been invoked.
For example, if a particular {\tt new} statement is called multiple
times, we will consider each return value as a separate variable.
We call this set of program variables $\vartheta$. Iteratively, we take
all $v$-sized subsets of variables in $\vartheta$
for $v \in \{1, 2, 3, \ldots\}$. For a subset $V$ of size $v$, we
execute schedules that explore {\em all} interactions between all variables
in $V$.

To identify variables, we instrument heap allocation statements to
generate a new variable name for each invocation of the statement.
As we explain later, we also
prioritize the variables which are generated in the first few loop
iterations.
To identify interactions between a subset of variables, we instrument
accesses to these variables. We use a lightweight and
imprecise static alias
analysis~\cite{andersen94programanalysis, whaley:alias:pldi04,
lhotak:alias:tosem08} to identify
program points
at which each variable in $\vartheta$ {\em may} be accessed. Our
static analysis
assumes that the program is {\em memory-safe}. i.e., locations outside
allocation boundaries will not be accessed. Memory-safety can be separately
checked using other available tools.

Without variable bounding, all accesses to all variables must be instrumented
with a call to the scheduler which implements exhaustive schedule enumeration.
With variable bounding, this instrumentation can be significantly reduced.
For
a variable $x_i \in \vartheta$, we call the set of program locations at which
it may be accessed $a_{x_i}$. With variable bounding, we only check
interactions within a variable subset
$V=\{x_0,x_1,\dots,x_v\}$, and instrument all
locations in the
set $(a_{x_0} \cup a_{x_1} \cup \dots \cup a_{x_v})$.
The instrumentation
code includes a call to a scheduler function, {\tt varaccess()} that
yields to the scheduler which
implements priority scheduling and systematic pre-emption.
{\tt varaccess()} is inserted after the program has accessed and possibly
updated the variable.
To ensure that pre-emption occurs only on accesses to the set of
tracked variables, the instrumentation code dynamically checks
that the accessed memory address is one of the tracked variables before calling
{\tt varaccess()}.
The {\tt varaccess()} call serves as a potential yield
point (or context-switch point), i.e.,
at this point, the scheduler can choose to run another thread.
To allow a thread to be pre-empted before its first access to a
variable, we also insert a {\em fake} {\tt varaccess()} before the first
instruction of each thread.
Our enumeration algorithm is similar to that used in CHESS~\cite{chess} and
we discuss it in Section~\ref{sec:implementation}.

\subsubsection*{Bug Characterization}
We call a concurrency bug a $c$ context bug if at least $c$ pre-emptive context
switches are required for the bug to manifest. $c$ is also called the bug's
context-bound. This definition of context bound is taken
from previous work \cite{musuvathi:icb:pldi07}.

We call a concurrency bug a $v$-variable bug if the minimal set of constraints
required to manifest the bug involve preemption points at accesses to $v$ distinct
variables. $v$ is also called the bug's variable bound. By definition, $v\leq c$
for any $c,v$ bug.

Figures~\ref{fig:e-c0v0},~\ref{fig:e-c1v1},~\ref{fig:e-c2v1},~\ref{fig:e-c2v2} show
short programs with $(c=0,v=0)$, $(c=1,v=1)$, $(c=2,v=1)$, $(c=2,v=2)$ bugs
respectively for exposition. The numbers in comments give the order of execution for
an assertion failure. In these short programs, we count a pre-emption
against the shared variable that was last accessed. Also, we assume that a bug
exists if the ASSERT statement can fail.
\begin{figure}[htb]
\begin{center}
\begin{tabular}{p{3cm}|@{\ \ \ \ }p{3cm}}
\multicolumn{2}{c}{a = 0}\\
Thread 1:            & Thread 2:\\
\ \ ASSERT(a == 0); // 2  & \ \ a++; // 1 \\
\end{tabular}
\caption{\label{fig:e-c0v0}A short program with a $c=0,v=0$ bug}
\end{center}
\end{figure}

\begin{figure}[htb]
\begin{center}
\begin{tabular}{p{3cm}|@{\ \ \ \ }p{3cm}}
\multicolumn{2}{c}{a = 0}\\
Thread 1:             &Thread 2:\\
\ \ t1 = a; // 1          &\ \ a++; // 2\\
\ \ t2 = a; // 3          &\\
\ \ ASSERT(t1==t2); // 4 &\\
\end{tabular}
\caption{\label{fig:e-c1v1}A short program with a $c=1,v=1$ bug}
\end{center}
\end{figure}

\begin{figure}[htb]
\begin{center}
\begin{tabular}{p{3cm}|@{\ \ \ \ }p{3cm}}
\multicolumn{2}{c}{a = 0}\\
Thread 1:                & Thread 2:\\
\ \ t1 = a; // 2             &\ \ a = 1; // 1\\
\ \ t2 = a; // 4             &\ \ a = 0; // 3\\
\ \ ASSERT(t1==t2); // 5\\
\end{tabular}
\caption{\label{fig:e-c2v1}A short program with a $c=2,v=1$ bug}
\end{center}
\end{figure}

\begin{figure}[htb]
\begin{center}
\begin{tabular}{p{4.2cm}|@{\ \ \ \ }p{3cm}}
\multicolumn{2}{c}{\ \ \ \ a = 0, b = 0}\\
Thread 1:                & Thread 2:\\
\ \ t1 = a; // 1             &\ \ a = 1; // 2\\
\ \ t2 = a; // 3             &\ \ b = 1; // 4\\
\ \ t3 = b; // 5              &\ \ b = 0; \\
\ \ ASSERT(t1==t2 or t3 != 1); // 6 & \\
\end{tabular}
\caption{\label{fig:e-c2v2}A short program with a $c=2,v=2$ bug}
\end{center}
\end{figure}

%
%
%
%
%

\subsubsection*{Schedule Characterization}
A schedule is characterized by $c$---the number of pre-emptive context switches in it,
and $v$---the number of distinct variables at which a pre-emptive context switch was performed. 

\subsubsection*{Search Space Reduction}
We now discuss how variable bounding helps reduce the search space.
Let us assume that a multi-threaded program with $t$ threads
has $Q$ distinct shared variables, represented as a set $\vartheta$ of variables,
i.e., $|\vartheta|=Q$. For simplicity, let us also assume that
each thread in the program accesses each variable in $\vartheta$ exactly
$d$ times. Hence, the total number of variable accesses by a thread
are $dQ$. Assuming that only accesses to these shared variables
are interesting context-switch points, and assuming $n$ threads,
$k=ndQ$ ($k$ is the number of
steps in a program).
Therefore, the number of schedules that need to be explored
at context bound $c$ are $O((ndQ)^{c})$. Let us call this expression $A$.

If we focus on a subset $V \subset \vartheta$ of $v$ variables, the
number of schedules that need to be explored at context bound $c$
are $\binom{Q}{v}(ndv)^{c}$ (first choose a subset $V \subset \vartheta$,
then explore all schedules with preemptions at accesses to variables in $V$).
Assuming $v,c \ll Q$, this expression is $O(Q^{v}(ndv)^c)$.
Comparing with $A$, we see that this expression is less than $A$
if $v < c$. This reduction in the search space (number of execution runs)
is significant for programs with a large number of variables (large $Q$). Apart
from this reduction in the number of execution runs, the time taken by each
execution run also decreases dramatically with variable bounding, as only the
accesses to variables being tracked need to be instrumented. We study both these improvements
in detail in our experiments in Section~\ref{sec:results}.

At $v = c$, variable bounding provides no improvement in the size
of the search space, but still provides a significant reduction in runtime because
of much lower instrumentation overhead (only the tracked variables need to
be instrumented). 
Effectively, by slicing the program into accesses to a small subset
of variables, we reduce the number of program steps $k$.
This is because only
accesses to the variable being tracked are considered valid context switch
points.
As we discuss in our experiments (Section~\ref{sec:results}), this reduction
is significant for most programs.
This method of reducing $k$ is more general than the approach
used in previous tools (e.g., CHESS~\cite{chess}) where all accesses to
non-synchronization variables
are ignored.

While we have used a simplified assumption of constant number of accesses $d$ to
each variable by each thread, the result does not change (although the analysis
gets more involved) if we assume varying number of accesses by each thread to different
variables. The same result can be obtained by replacing $d$ with the average
number of accesses by a thread to a randomly-chosen variable, and we skip this
discussion for brevity. We analyze a more general scenario in our discussion on
probability bounds for randomized bug-finding
algorithms (Section~\ref{sec:varbound-random}).

\subsection*{Heap Allocated Variables and Arrays}
\label{sec:heap_allocated_variables}
Our set of tracked variables include heap-allocated variables.
Heap-allocated variables are named using the heap-allocation statement and
the number of times that statement was executed before this variable was
generated. A large number of heap allocations by one statement
can generate a large number of variables causing our variable-bounding
algorithm to get stuck at low $v$ values.

In our experience, if the program contains a bug involving a certain {\em type} of
heap variable, the bug usually manifests while tracking the first few variables
of that type. For example, if the program constructs and accesses
a heap data structure (e.g., linked list), it
is very likely that a bug, if it exists, will be exposed by exploring all interactions
among the first few elements of that data structure.

The challenge is to identify and group variables of a certain type, so that
only the first few variables of that type are considered. We use a simple
heuristic that we found to work well in practice. The type of a variable is
defined by the callstack at the time of allocation of that variable. We expect
that largely, variables allocated with identical callstacks are of the same
type.
This heuristic is neither sound nor complete. For example, it is possible that
variables of the same type are allocated at different points in the
program, hence having different callstacks. This can cause our algorithm
to execute more than the required number of schedules. A more serious problem
is that
two identical callstacks could generate completely different types of
variables. This can cause our algorithm to overlook certain bugs.
Fortunately, in practice, such code is rare.

The algorithm works as follows. For each heap allocation, we generate a new
variable ID labeled by the location of the allocation statement and the number
of times that statement was executed. With each variable ID, we also associate
the number of times this allocation statement has previously been executed with
an identical callstack. We call this latter number, the loop iteration
number (because the allocations with identical callstacks must
be happening through a loop) of that variable. We first search for bugs
involving variables with lower loop iteration numbers before searching for bugs
involving variables with higher loop iteration numbers. We call this algorithm
loop-iteration bounding and denote the current loop-iteration number being
searched with letter $l$.
Figure~\ref{fig:instrumentation_code} shows our logic for implementing loop
iteration numbers. Note that, by design, variables allocated by recursive calls with
different recursion depth will be named differently (because they will have
different call stacks).

\begin{figure}[htb]
  {\tt
  <instrumentation code for new()>\\
  callstack := get\_current\_callstack();\\
  $v$ := <heap-allocation-statement, alloc\#>;\\
  lin := loop\_iteration\_number(callstack);\\
  increment\_loop\_iteration\_number(callstack);\\
  if (lin <=  $l$) \{\\
  \begin{tabular}{@{}p{0.2cm}@{}l}
  & add $v$ to the set $Q$ of the variables to be tracked;
  \end{tabular}
  \}\\
  }
  \caption{\label{fig:instrumentation_code}Instrumentation code for heap-allocation
  statements that considers only variables with loop-iteration number $\leq l$.}
\end{figure}

We also need special handling for array variables. Whenever possible, we treat
each location in the array as a separate variable. If the search space
size becomes unmanageably large (for high values of $v$), we use a less precise
but sound approach of considering the whole array as a single variable.

\section{Variable Bounding on Randomized Algorithms}
\label{sec:varbound-random}
Apart from exhaustive state space exploration to ascertain the absence
of certain bugs, randomized schedulers that provide probabilistic
guarantees of finding certain types of bugs have also been
proposed. Depth-bounding~\cite{pct} (also called Probabilistic Concurrency
Testing in the paper) is one such approach. The primary advantage of
randomized approaches over exhaustive search is that
the former can cover a large part of the program in relatively fewer
runs. Exhaustive search, on the other hand, can get stuck in local regions
of the program for long periods of time causing bugs in other regions
to go undetected. In this section, we discuss variable bounding
in the context of randomized search.

In particular, we study Probabilistic Context Bounding (PCT)~\cite{pct} that
proposed the \emph{bug-depth} metric.
While we analyze only PCT, similar arguments
will hold for other randomized algorithms.
For a program spawning at most $n$ threads and executing at most $k$ total instructions,
PCT algorithm works as follows (for an input parameter $d$, denoting the depth of the bug being
searched):
\begin{enumerate}
\item Assign $n$ priority values $d$, $d+1$,\ldots, $d+n$ randomly to the $n$ threads.
\item Pick $d-1$ priority change points $k_1$,\ldots, $k_{d-1}$ randomly in the range
$\lbrack1,k\rbrack$. Each $k_i$ has an associated priority value of $i$.
\item Schedule the threads by honoring their priorities, i.e., always execute an enabled thread
with the highest priority. When a thread reaches the $i$-th
change point (i.e., executes the $k_i$-th instruction), change the priority of that thread
to $i$.
\end{enumerate}
Burckhardt et. al.~\cite{pct} proved that this algorithm
finds a bug of depth $d$ with probability at least $1/nk^{d-1}$.

We implement variable bounding on PCT by first randomly choosing a set
of $v$ variables, and then
randomly choosing $d-1$ priority change points at one of the
accesses to the chosen variables (other instructions in the program are
not considered as potential priority change points).
For heap-allocated variables, we simply choose a heap allocation
statement ({\tt new} and {\tt malloc}) in lieu of a variable. Accesses
to any of the variables
allocated at the chosen heap-allocation statement are considered
potential priority
change points.

Notice that using a heap-allocation statement as one ``variable'' in the randomized
algorithm is a departure from the strategy used
in the exhaustive-search strategy, where each heap allocation is considered
a separate variable. This is done to ensure that we know the number of
these variables at compile time, and hence can appropriately choose a variable
set to provide probabilistic guarantees. Under this new definition of a variable, a
$v$-variable bug is a bug that involves memory locations allocated at at most
$v$ distinct heap-allocation statements (or globals). This new definition
performs a coarser classification of program's memory locations.
This could potentially cause higher number of required executions for
effective state space search for the same $v$ value. However, this is still a
significant improvement over not using variable bounding at all.
Also, this definition of variable
bounding does not make our argument on most bugs having low variable bounds any
weaker.

Assume that the total number of global variables and heap allocation statements
in a program is $Q$. We change the PCT algorithm to implement variable bounding as follows:
\begin{enumerate}
\item[0.] Choose a set of $v$
variables $q_1$,\ldots,$q_v$ representing
the minimal set of variables involved in the bug being searched ($v < d$).
\item[1.] Assign $n$ priority values $d$, $d+1$,\ldots, $d+n$ randomly to the $n$ threads.
\item[2.] Let $k_{q_1}$,\ldots,$k_{q_v}$ denote
upper-bounds on the number of instructions accessing $q_1$,\ldots$q_{v}$
respectively in any run of the program. Hence, variable $q_r$ is accessed
at most $k_{q_r}$ times in any execution of the program.
Construct a
set $S$ of elements
of the form $(q_r,j)$, where $r$ is in the
range $\lbrack1,v\rbrack$ and $j$ is in the range $\lbrack1,k_{q_r}\rbrack$.
The set $S$ will have $k=\sum_{r=1\ldots v}{k_{q_r}}$ elements.
Pick
$d - 1$ random elements from $S$ to represent the priority change
points.
\item[3.] Schedule the threads by honoring their priorities. For $i$-th chosen
element $(q_{r_i},j_i)$ in the previous step, force a
priority change point at the $j_i$th access of the $q_{r_i}$th variable. i.e., change the
priority of the thread at this point to $i$.
\end{enumerate}

We call this modified algorithm PCTVB (PCT with Variable Bounding).
Unlike PCT, where the priority change points $k_i$s are chosen randomly
from $1,\ldots,k$, PCTVB first picks a set of variables (or heap-allocation
statements), and then chooses priority change points among the accesses
to this set. PCTVB has two advantages over PCT:
\begin{enumerate}
\item As we show below, PCTVB improves the probability bound on
finding a bug with depth $d$ and variable-bound $v$. Because most bugs have
low $v$, this results in overall improvement in the bug-finding probability.
\item Choosing a set of variables apriori allows us to instrument only the program
points that can potentially access these variables. These
program points are identified using static (imprecise) alias analysis.
This is a significant
improvement over PCT where all variable accesses need to be instrumented.
\end{enumerate}

\subsection*{Probabilistic Guarantees with Variable Bounding}
The analysis of the probabilistic guarantees
of PCTVB is identical to that of PCT, as presented in the original paper~\cite{pct} and
we omit it for brevity. We simply revisit Theorem 9 (without proof) of the
original paper with our new variable bounding enhancement.
\begin{theorem}
  \label{theorem:pct_vb}
  Let $P$ be a program with
  a bug $B$ of depth $d$ and $q_1$,\ldots,$q_v$ be the minimal set of
  unique variables, accesses to which need to be preempted to trigger $B$.
  For a variable $q_i$, let $k_{q_i} \geq maxaccesses(P,q_i)$.
  Assuming $n \geq maxthreads(P)$,
$$
Pr[\text{PCTVB}(n,k,d,q_1,\ldots,q_v) \in B] \geq \frac{1}{n(\sum_{i=1..v}{k_{q_i}})^{d-1}}
$$
\end{theorem}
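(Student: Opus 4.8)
The plan is to mirror the proof of Theorem 9 in the PCT paper, but with the sample space of priority change points restricted to the set $S$ of variable-access events rather than all $k$ instruction steps. First I would recall the structure of the PCT argument: a depth-$d$ bug $B$ is witnessed by a set of $d-1$ ordering constraints of the form ``event $e_i$ must be scheduled before event $f_i$,'' together with a choice of which thread holds the lowest base priority. The PCT scheduler finds $B$ if (i) the $d-1$ priority change points land exactly on the $d-1$ ``trigger'' events $e_1,\ldots,e_{d-1}$, in the correct relative order matching priorities $1,\ldots,d-1$, and (ii) the initial random priority assignment puts the relevant threads in the configuration that makes the honoring-priorities schedule realize the constraint chain. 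The key observation for the variable-bounding version is that, by hypothesis, the minimal constraint set for $B$ only preempts at accesses to the variables $q_1,\ldots,q_v$; hence every trigger event $e_i$ is an access to some $q_{r_i}$, i.e. $e_i \in S$.

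The main steps, in order, would be: (1) Fix the witnessing schedule and its $d-1$ trigger events $e_1,\ldots,e_{d-1}$, noting each $e_i$ is an element of $S$ and that $|S| = k = \sum_{i=1}^{v} k_{q_i}$ since $k_{q_i} \geq \mathit{maxaccesses}(P,q_i)$ guarantees $S$ contains every access that can occur in any run. (2) Lower-bound the probability that step 2 of PCTVB picks exactly the multiset $\{e_1,\ldots,e_{d-1}\}$ with the right priority labels: choosing $d-1$ elements out of $|S|=k$ and assigning them the labels $1,\ldots,d-1$ in the order forced by the constraint chain succeeds with probability at least $1/\binom{k}{d-1}(d-1)! \cdot (\text{something}) $ — here I would reuse the exact counting lemma from the PCT proof, which actually gives the cleaner bound $1/k^{d-1}$ after accounting for the fact that only one ordering of the labels is needed and the change points are effectively chosen as an ordered tuple. (3) Multiply by the probability $\geq 1/n$ that the initial priority assignment in step 1 is compatible (the lowest priority goes to the correct thread among the $n$ threads). (4) Argue independence of the two random choices (priority assignment vs. change-point selection), so the probabilities multiply, yielding $\Pr \geq \frac{1}{n} \cdot \frac{1}{k^{d-1}} = \frac{1}{n(\sum_{i=1..v} k_{q_i})^{d-1}}$.

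The step I expect to be the genuine obstacle is (2): reproducing the precise combinatorial bound on the change-point selection. In PCT this relies on a careful argument that scheduling the $d-1$ change points in increasing priority order at the $d-1$ trigger events suffices, and that the probability of hitting this configuration is $\geq 1/k^{d-1}$ rather than merely $\geq 1/\binom{k}{d-1}$; getting the constants right requires faithfully invoking the structure of the honoring-priorities scheduler and the fact that the constraint chain is totally ordered. Since the excerpt explicitly says the analysis is ``identical to that of PCT'' and omits it for brevity, the honest move here is to note that the only change is the substitution $k \mapsto \sum_{i=1..v} k_{q_i}$ in the size of the sample space of change points — legitimate precisely because the minimality hypothesis forces all trigger events into $S$ — and then cite the PCT argument verbatim for the rest. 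A secondary subtlety worth checking is that $v < d$ ensures there are enough ``slots'' and that restricting to $S$ does not accidentally exclude a needed trigger event, which again follows from the minimal-set hypothesis on $B$.
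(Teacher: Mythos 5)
Your proposal matches the paper's own treatment: the paper simply asserts that the proof is identical to that of Theorem 9 in the PCT paper with $k$ replaced by $\sum_{i=1..v} k_{q_i}$, which is exactly your argument that the minimality hypothesis forces every trigger event into the set $S$ of tracked-variable accesses, so the PCT counting argument goes through verbatim on a sample space of size $\sum_{i=1..v} k_{q_i}$. Your honest deferral to the PCT combinatorics for step (2) is precisely what the paper does, so the approach is essentially the same and correct.
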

Here, $B$ is the set of schedules that expose the $d$-depth bug in the program.
$maxaccesses(P,q_i)$ returns the maximum number of accesses made by $P$ to
variable $q_i$ in any single run. $maxthreads(P)$ is the maximum number of threads
spawned in $P$.
The proof is identical to that of the original theorem, and is obtained by
simply replacing $k$ with $\sum_{i=1..v}{k_{q_i}}$.

For a program with $Q$ total global variables and heap allocation
statements, the probability that we pick the
correct $v$ variables ($q_1$,\ldots,$q_v$) to trigger the $v$-variable bug (if it
exists) is $\frac{1}{\binom{Q}{v}}$. Hence, the probability of
finding the $v$-variable bug is
$$
Pr[\text{PCTVB}(n,k,d,v) \in B] \geq
\frac{1}{\binom{Q}{v}}\frac{1}{n(\sum_{i=1..v}{k_{q_i}})^{d-1}}
$$

This expression depends on the sum of the access
frequencies $k_{q_1}$,\ldots,$k_{q_v}$ of
the variables $q_1$,\ldots,$q_v$. Given that the total number of variables
is $Q$, and the total number of variable accesses in any single
run is at most $k=\sum_{q_i}{k_{q_i}}$, we expect this sum to be less than $\frac{vk}{Q}$
on average (averaged over all $v$-sized sets of global variables and heap
allocation statements).
Let us assume that the sum is $f\frac{vk}{Q}$ where $f \leq 1$ on average but
could be higher depending on the set of chosen variables.
Upper-bounding $\binom{Q}{v}$
with $O(Q^v)$ for small values of $v$, this expression evaluates to
$$
Pr[\text{PCTVB}(n,k,d,v) \in B] \geq \frac{Q^{d-v-1}}{n(kvf)^{d-1}}
$$

Comparing this with PCT's original bound of $\frac{1}{nk^{d-1}}$, variable bounding
helps if
$$
Q^{d-v-1} \geq (vf)^{d-1}
$$
Assuming $v \ll Q$, variable bounding significantly improves the
lower bound on probability if $v < d-1$ and $f$ is small. In other words,
variable bounding helps if the bug being searched involves fewer variables
than its bug depth, {\em and} these variables are accessed less than
average access frequencies.

A case of particular interest are bugs with variable bound $v=1$, as
they are by-far the most common.
The inequality shows that the probability of finding
a $1$-variable bug of depth $2$ or higher improves significantly if $f < 1$.
In other words, the probability of finding bugs involving ``corner variables''
(variables used rarely compared to others) improves with variable
bounding. Intuitively, variable bounding gives all variables an equal chance,
while plain depth-bounding (or context-bounding) gives higher chance to more
frequently-accessed variables. We confirmed this experimentally by writing a small
program with two variables and varied the relative access frequencies of the variables. One of the two variables was involved in a $c=1,v=1,t=2$ concurrency
bug.
Figure~\ref{fig:pctf} shows that as the frequency of access to the variable
containing the bug is decreased, PCTVB requires
fewer executions
to find the bug compared to PCT.
\begin{figure}[htb]
  \epsfig{figure=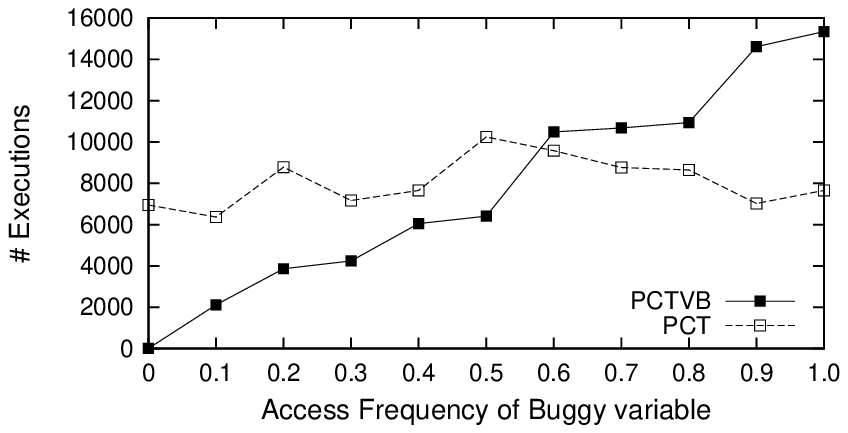, width=\columnwidth}
  \caption{\label{fig:pctf} Figure represents the number of executions required (on average) to
  trigger the bug for PCT and PCTVB as the access frequency of the buggy variable is changed.}
\end{figure}

\ifthenelse{\boolean{ThisIsTR}}{
We profiled the access frequencies of variables in different programs
in Figure~\ref{fig:variable_frequencies}. The details of these programs are
given in Table~\ref{tab:benchmarks}.}{We profiled the access frequencies of
variables of the programs listed in Table~\ref{tab:benchmarks}. Detailed graphs
can be found in our technical report \cite{tr}.} The number of accesses
varies widely across different variables for almost all benchmarks. Typically,
we expect variables with fewer accesses to undergo relatively less
testing and thus have higher likelihood of having bugs. Even if we assume
that all variables are equally likely to contain bugs, we see that
variable bounding improves the overall probability of finding a bug (if one exists).
We present a simple example.
Consider a program with a $v=1, d=2$ bug that manifests if a certain
priority sequence is followed and priority change
point occurs on a certain access $a_{q_b}$ to variable $q_b$. Assume there are
$Q$ different variables in the program, and each variable $q_i$ is accessed
at most $k_{q_i}$ number of times in any one run of the program.
Hence the probability of uncovering the bug is the probability that we pick
the correct priority sequence, and the probability that we choose $a_{q_b}$ as
the lone priority change point. The former is independent of variable bounding.
Below, we compare the latter, with and without variable bounding.

Without variable bounding, the probability of picking $a_{q_b}$ as a priority change
point is at least
$\frac{1}{\sum_{q_i}{k_{q_i}}}$ (let's call this expression $E1$). This
expression is simply the probability of choosing $a_{q_b}$ among
$\sum_{q_i}{k_{q_i}}$ potential priority change points. Notice that
$E1$ is independent of $k_{q_b}$.

With variable bounding, we first choose a variable and then choose
an access point of that variable. Hence, the probability that we pick
$a_{q_b}$ as a priority change point is $\geq \frac{1}{Q}.\frac{1}{k_{q_b}}$
(the probability that we pick $q_b$ multiplied by the probability that we pick $a_{q_b}$).
This expression depends on $q_b$ and $k_{q_b}$.
Assuming each variable is equally likely to contain a bug, further computing
the expected value of this expression over all $q_b$, we get
$\frac{1}{Q}\sum_{q_b}{\frac{1}{Qk_{q_b}}}$ (let's call this expression $E2$).

Comparing $E1$ and $E2$, and using Jensen's inequality, we get
$$
\frac{1}{\sum_{q_i}{k_{q_i}}} \leq \frac{1}{Q}\sum_{q_i}{\frac{1}{Qk_{q_i}}}
$$
or $E1 \leq E2$ with the equality happening only at
$k_{q_0}=k_{q_1}=$\ldots$=k_{q_Q}$. For typical access patterns to
variables in common programs \ifthenelse{\boolean{ThisIsTR}}{(see Figure~\ref{fig:variable_frequencies})}{\cite{tr}}, $E2$ is
expected to be significantly higher than $E1$. Hence,
assuming all variables are equally likely to have a bug, variable bounding provides
a tighter bound ($E2$) on the probability of finding the bug at $v=1, d=2$. A similar
argument holds for higher values of $v$ and $d$, and we omit the discussion for
brevity.

\ifthenelse{\boolean{ThisIsTR}}{
\begin{figure*}[htb]
\centerline{
\epsfig{figure=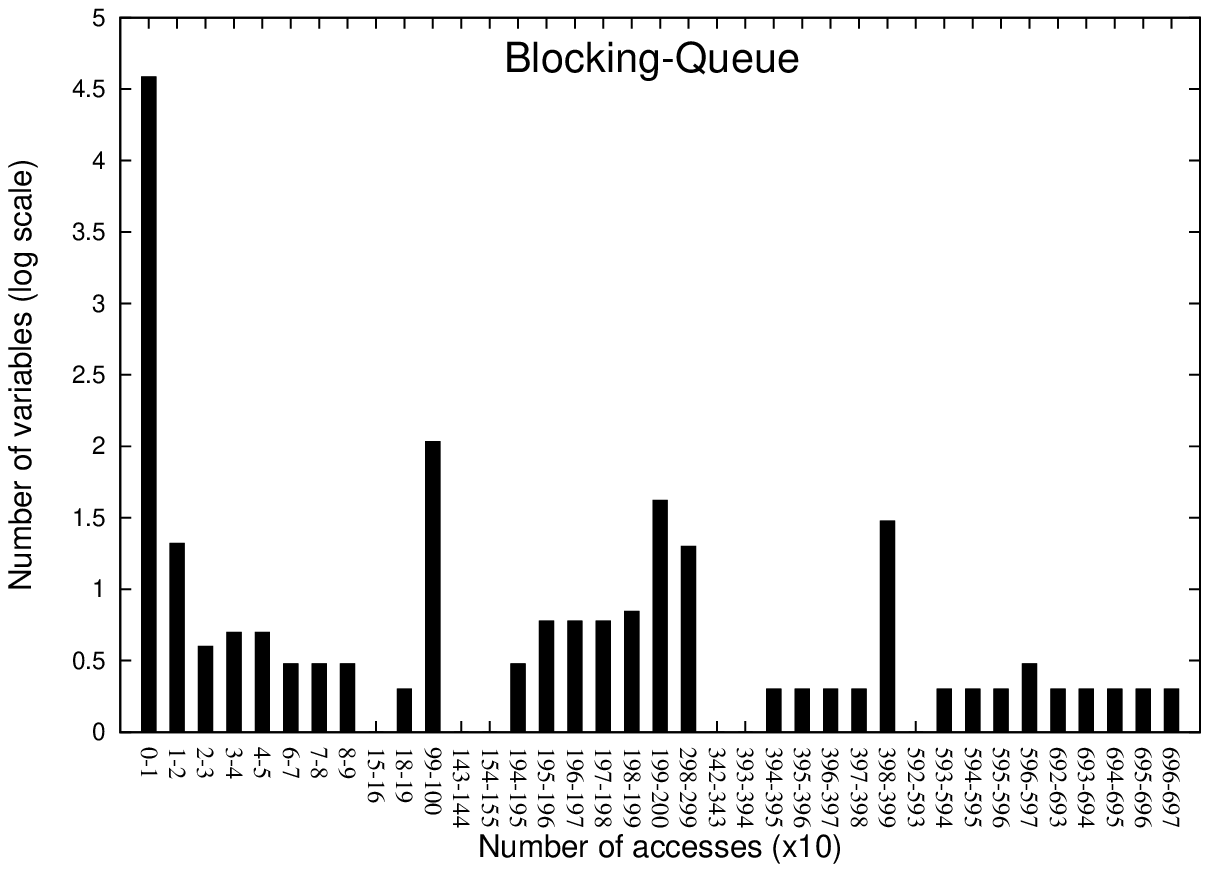, width=\columnwidth}
\epsfig{figure=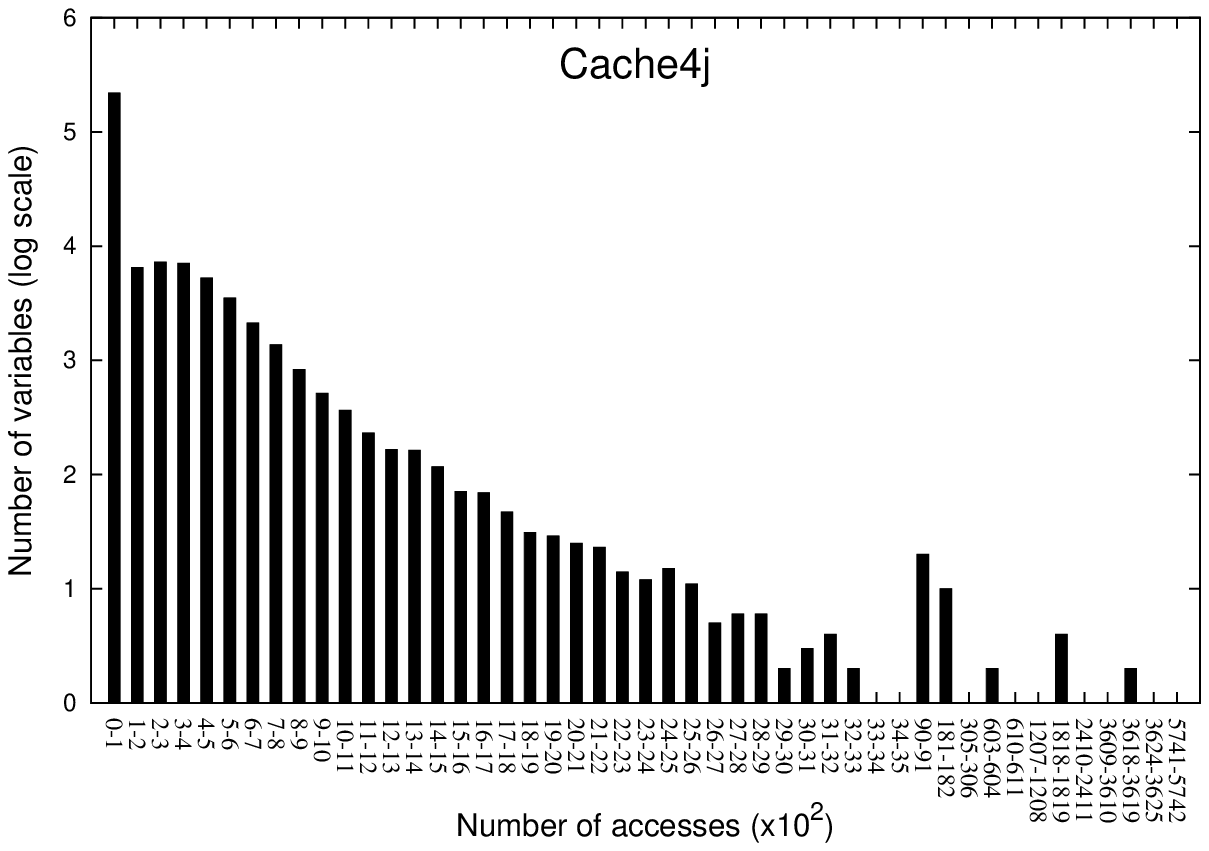, width=\columnwidth}}
\centerline{
\epsfig{figure=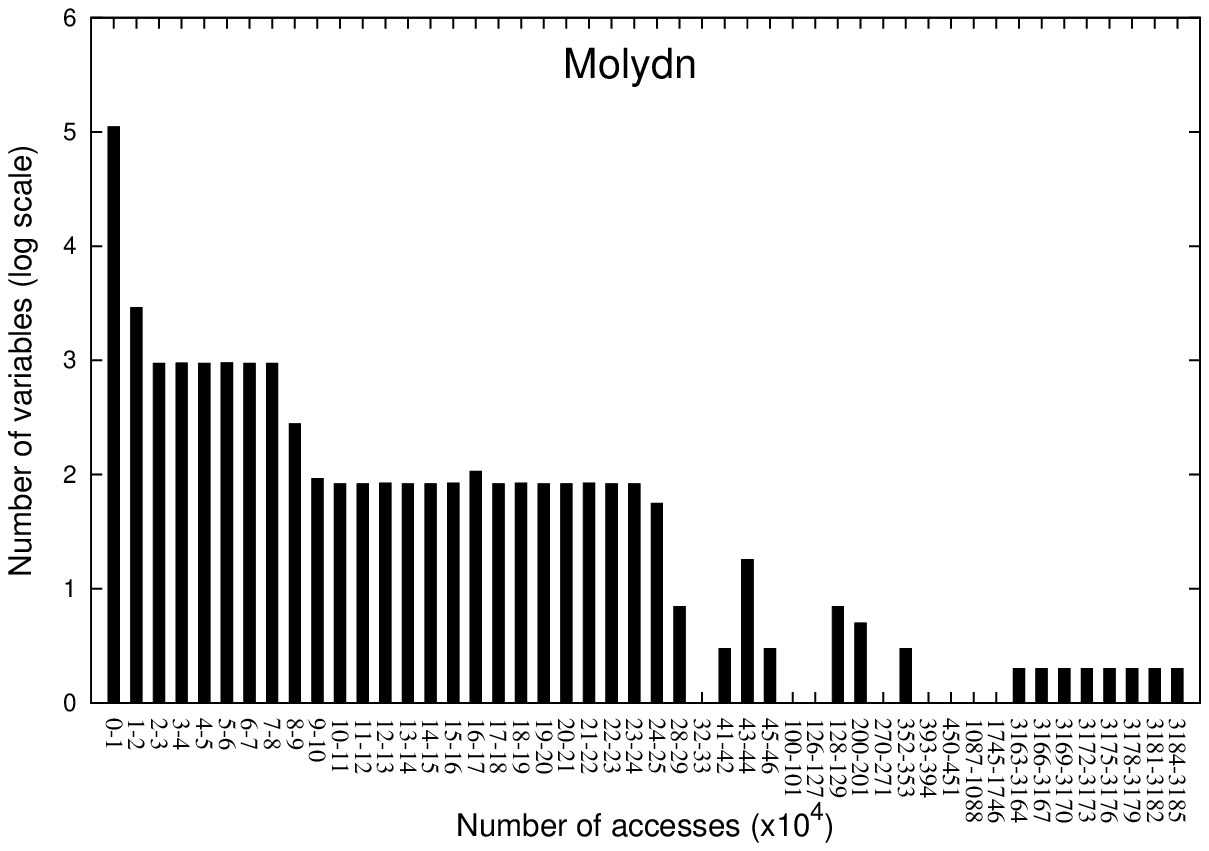, width=\columnwidth}
\epsfig{figure=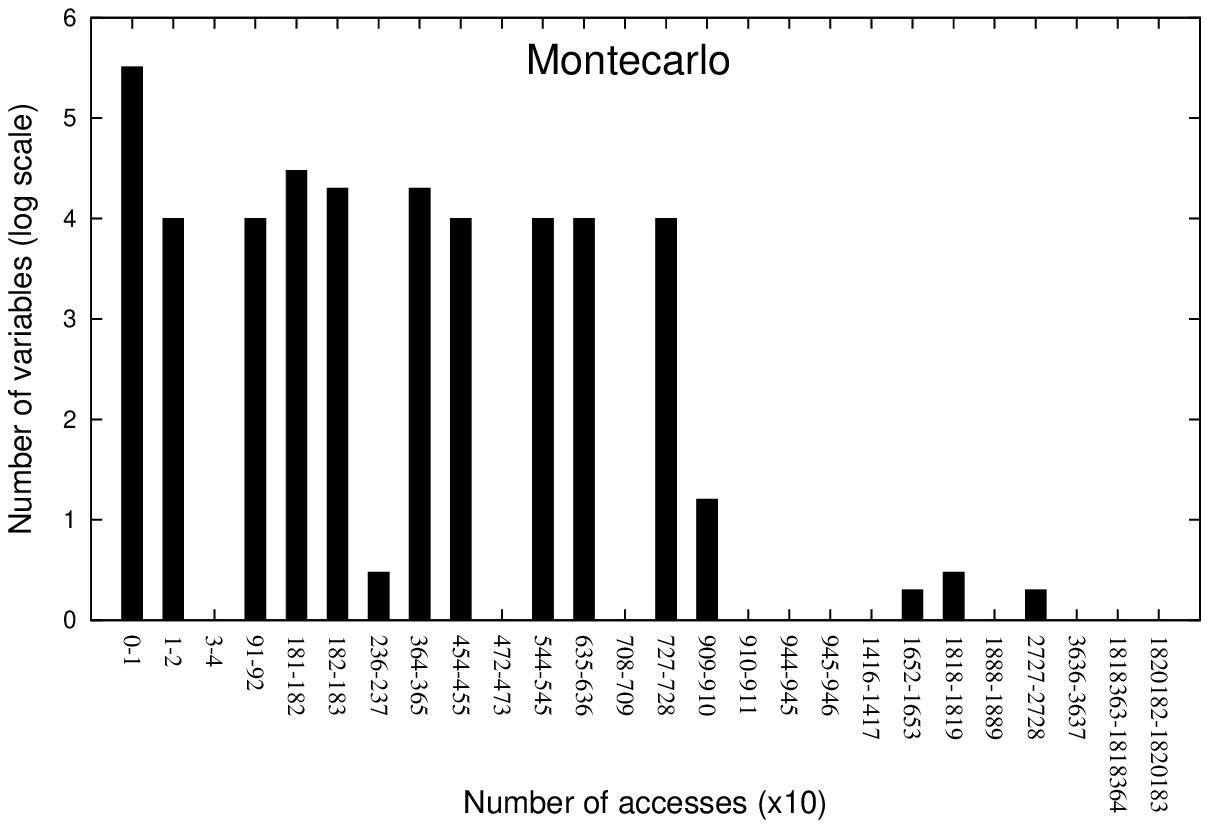, width=\columnwidth}}
\centerline{
\epsfig{figure=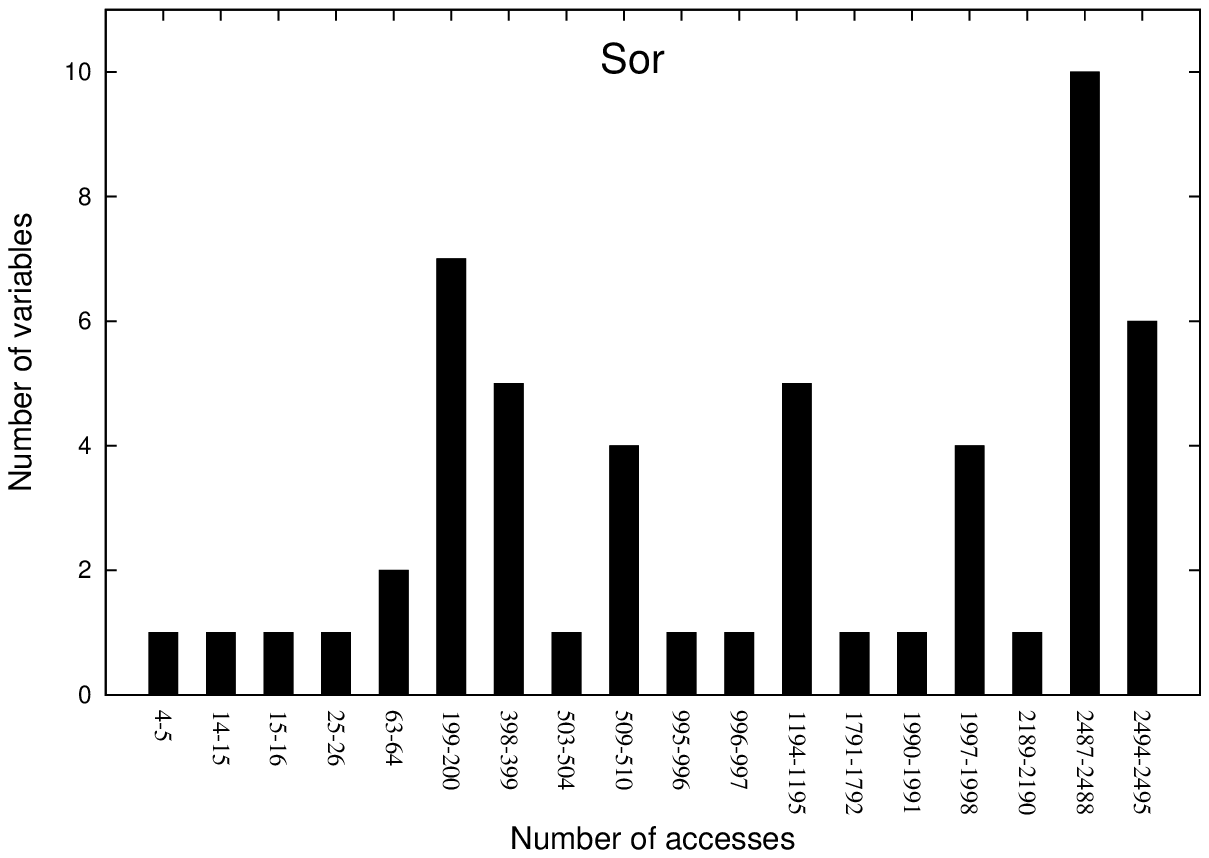, width=\columnwidth}
\epsfig{figure=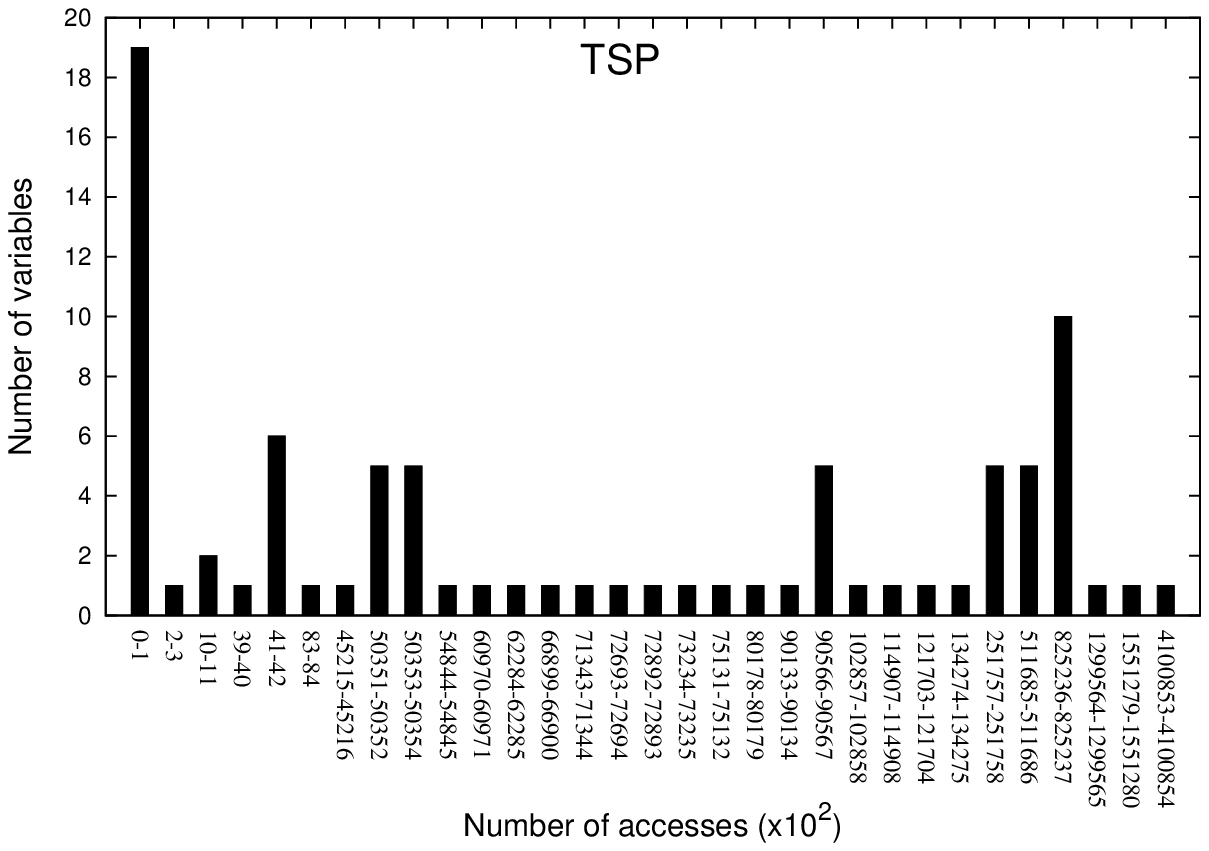, width=\columnwidth}
}
\caption{\label{fig:variable_frequencies}This figure plots the variable access
frequency profile for six of our benchmarks. The
values on the x-axis represent the frequency of access of a variable, and the y-axis plots
the number of variables that are accessed at that frequency. For example, in {\tt tsp},
19 variables are accessed between 0 to 100 times (first vertical bar), and only
1 variable is accessed between 200 to 300 times. These access frequencies were determined after running
our benchmarks multiple times on different inputs and averaging the results.}
\end{figure*}
}{}

\section{Thread Bounding}
\label{sec:threadbound}
Previous work on studying concurrency bugs found that most concurrency bugs can be
discovered by enforcing ordering constraints between a small number (typically two) of
threads~\cite{lu:learning_from_mistakes:asplos08}. This is our inspiration for
using thread-bounding while searching for concurrency bugs.
We call a bug that requires ordering constraints between at-least
$t$ distinct threads to be uncovered, a $t$-thread bug. $t$ is also called the
{\em thread-bound} of the bug.
By definition, the
thread-bound of a concurrency bug is always $2$ or higher.
Notice that our definition of thread-bound
also counts the threads that should {\em not} be executed
for a bug to manifest. For example, a bug that manifests only if thread A is
executed after thread B {\em and} thread C is not executed in between, will
be called a $3$-thread bug, and not a $2$-thread bug.
Also, $t$ is independent of $c$ and $v$. i.e., a $c$ context-bound bug and
a $v$ variable-bound bug, can have any thread bound $t \geq 2$.
Figures~\ref{fig:e-c0v0t3},~\ref{fig:e-c1v1t3},~\ref{fig:e-c2v2t3} show short
programs with $(c=0,v=0,t=3)$, $(c=1,v=1,t=3)$, and $(c=2,v=2,t=3)$ bugs, respectively.
\begin{figure}[htb]
\begin{center}
\begin{tabular}{p{2cm}|@{\ \ \ \ }p{3cm}|@{\ \ \ \ }p{3cm}}
\multicolumn{3}{c}{a = 0}\\
Thread 1:            & Thread 2:           & Thread 3:\\
\ \ a++;             & \ \ a++;            & \ \ ASSERT(a!=2);\\
\end{tabular}
\caption{\label{fig:e-c0v0t3}A short program with a $c=0,v=0,t=3$ bug}
\end{center}
\end{figure}

\begin{figure}[htb]
\begin{center}
\begin{tabular}{p{3cm}|@{\ \ \ \ }p{2.5cm}|@{\ \ \ \ }p{2.5cm}}
\multicolumn{3}{c}{a = 0}\\
Thread 1:            & Thread 2:           & Thread 3:\\
\ \ t1 = a;          & \ \ a++;            & \ \ a++;\\
\ \ t2 = a;          &                     &         \\
\ \ ASSERT(t1 $\leq$ t2+1);          &                     &         \\
\end{tabular}
\caption{\label{fig:e-c1v1t3}A short program with a $c=1,v=1,t=3$ bug}
\end{center}
\end{figure}

\begin{figure}[htb]
\begin{center}
\begin{tabular}{p{4cm}|@{\ \ \ \ }p{1.5cm}|@{\ \ \ \ }p{1.5cm}}
\multicolumn{3}{c}{\ \ \ \ a = 0}\\
Thread 1:            & Thread 2:           & Thread 3:\\
\ \ t1 = a;          & \ \ a++;            & \ \ b++;\\
\ \ t2 = a;          &                     &         \\
\ \ t3 = b;          &                     &         \\
\ \ t4 = b;          &                     &         \\
\ \ ASSERT(t1==t2 or t3==t4);         &                     &         \\
\end{tabular}
\caption{\label{fig:e-c2v2t3}A short program with a $c=2,v=2,t=3$ bug}
\end{center}
\end{figure}

\begin{sloppypar}
We posit that the number of schedules required to uncover a $t$-thread bug
increases with $t$.
For example, for a program with $n$ threads
$T_1, \cdots, T_n$, at context-bound $c=0$, all $2$-thread bugs
can be uncovered by {\em only two}
schedules, namely $\{T_{1},T_{2},T_{3},\dots,T_{n-1},T_{n}\}$ and
$\{T_{n},T_{n-1},T_{n-2},\dots,T_{2},T_{1}\}$. This is
because for any subset of 2 threads $\{T_{i},T_{j}\}$, both orders
between $T_{i}$ and $T_{j}$ (i.e., $\{T_i,T_j\}$ and $\{T_j,T_i\}$) are covered by
these two schedules.
In other words, if we arrange the threads in an arbitrary permutation,
enumerating two orders (increasing and decreasing) are enough to
uncover all $2$-thread bugs at context bound $0$.
\end{sloppypar}

A similar argument holds for $t$-thread bugs where $t > 2$.
At $c=0$, it suffices to enumerate enough schedules to explore all
$t!$ relative orderings of all $t$-sized subsets of the $n$ threads, to uncover
a $t$-thread bug.
To do this, we require an algorithm that generates enough permutations
of $n$ numbers, such that all $t!$ permutations of all $t$-sized subsets of the
$n$ numbers are exhaustively covered.

Lemma \ref{lemma:permutations} presents a randomized algorithm
to enumerate all $t!$ permutations of {\em all} $t$-sized subsets
of $n$ numbers using less than $O((t+1)! log(n))$
permutations of $n$ numbers
with a high probability. Notice that the algorithm has only
logarithmic growth with $n$, as opposed to $n!$ growth without thread bounding.

\begin{lemma}
  \label{lemma:permutations}
  The number of independent random permutations of $n$ numbers that
  need to be generated to observe {\em all} $t!$ relative orderings
  of {\em all} ${n \choose t}$ subsets of size $t$ with probability
  at least $(1-\epsilon)$,
  is $(t+1)! (log(nt) + log(\frac{1}{\epsilon}))$.
\end{lemma}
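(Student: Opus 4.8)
The plan is to treat each (subset, relative ordering) pair as a ``coupon,'' bound the probability that a single random permutation fails to collect a given coupon, and then apply a union bound over all coupons together with the standard exponential tail estimate.

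First I would isolate the key symmetry fact. Fix a $t$-element subset $S\subseteq\{1,\dots,n\}$ and a fixed linear order $\pi$ on the elements of $S$. A uniformly random permutation of $\{1,\dots,n\}$, when restricted to the positions of the elements of $S$, induces a uniformly random linear order on $S$: the $n!$ permutations split into $t!$ classes of equal size according to their restriction to $S$. Hence a single random permutation realizes the ordering $\pi$ on $S$ with probability exactly $1/t!$, independently of which subset $S$ we picked.

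Next I would put the trials together. Generating $m$ independent random permutations, the probability that a fixed pair $(S,\pi)$ is \emph{never} realized is $(1-1/t!)^m \le e^{-m/t!}$. There are $\binom{n}{t}\,t!$ such pairs, so by the union bound the probability that at least one pair is missed is at most $\binom{n}{t}\,t!\,e^{-m/t!}$. Using $\binom{n}{t}\,t! = n(n-1)\cdots(n-t+1) \le n^t$, this failure probability is at most $n^t e^{-m/t!}$. Requiring $n^t e^{-m/t!}\le\epsilon$ and solving gives that $m \ge t!\,\bigl(t\log n + \log(1/\epsilon)\bigr)$ permutations suffice. A one-line calculation then converts this to the stated closed form: since $t\cdot t! \le (t+1)!$ and $\log n \le \log(nt)$, we have $t!\,(t\log n + \log(1/\epsilon)) \le (t+1)!\,\log n + (t+1)!\,\log(1/\epsilon) \le (t+1)!\,(\log(nt)+\log(1/\epsilon))$, so $(t+1)!\,(\log(nt)+\log(\tfrac{1}{\epsilon}))$ permutations are enough.

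\textbf{Main obstacle.} There is no deep difficulty here; the plan is essentially a coupon-collector argument. The one point that needs care is the reduction in the first step — recognizing that a random permutation of \emph{all} $n$ numbers induces the \emph{uniform} distribution over the $t!$ orderings of any fixed $t$-subset, which makes each coupon have probability exactly $1/t!$. The only other thing to watch is the bookkeeping of the logarithms, so that the $\binom{n}{t}\,t!$ factor from the union bound is absorbed cleanly into the $(t+1)!$ and $\log(nt)$ terms rather than leaving a messier $t!\log(n^t)$ expression; note also that independence of the $m$ permutations is used only within a single coupon's tail bound, not across coupons, so the union bound needs no independence assumptions between subsets.
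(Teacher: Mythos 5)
Your proposal is correct and takes essentially the same route as the paper's proof: the $1/t!$ symmetry fact for a fixed $t$-subset and ordering, a union bound over all $\binom{n}{t}\,t!$ (subset, ordering) pairs, and an exponential tail estimate solved for the number of permutations. If anything, your bookkeeping is slightly cleaner than the paper's, since you use the exact inequalities $\binom{n}{t}\,t!\le n^t$ and $(1-1/t!)^m\le e^{-m/t!}$ where the paper resorts to the approximations $t!\approx t^t$, $\binom{n}{t}\approx n^t$, and $(1-1/t!)^{t!}\approx 1/e$.
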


\begin{proof}
  Let $N$ be a set of $n$ distinct elements.
  Consider a fixed
  subset $S \subset N$ of $t$ elements and let $\pi$ be some
  arbitrary permutation of $S$.
  For any random permutation $\sigma$ of $n$ elements, the probability that
  $\pi$ is a subsequence of $\sigma$ is 
  $\frac{1}{t!}$ (by argument of symmetry). Hence, the probability of
  $\pi$ {\em not} appearing in $\sigma$
  is $(1 - \frac{1}{t!})$. If we enumerate $P$ independent random
  permutations of $n$ numbers, the probability of $\pi$ not appearing in
  any of the $P$ permutations
  is $(1 - \frac{1}{t!})^P$. For a fixed permutation $\pi$, let us denote
  this probability of $\pi$ not appearing in any of the $P$
  permutations by $F_{\pi}$.

  There are ${n \choose t}$ subsets of $N$ of size $t$, each
  having $t!$ permutations. Let
  us denote this set of $t!{n \choose t}$ permutations by $\Theta$.
  The probability that {\em any} one of the permutations in $\Theta$ is not
  observed in $P$ random permutations of $n$ numbers is upper-bounded by
  the sum of individual
  probabilities $\sum\limits_{\pi \in \Theta} F_{\pi} = t!{n \choose t}F_{\pi}$.
  We require this quantity
  to be less than $\epsilon$.
  $$
  t!{n \choose t}(1 - \frac{1}{t!})^P \leq \epsilon
  $$
  Writing $P$ as $(t!M)$, and approximating $(1 - \frac{1}{t!})^{t!}$ by
  $\frac{1}{e}$,
  $$
  t!{n \choose t}(\frac{1}{e})^M \leq \epsilon
  $$
  Approximating $t!$ by $t^t$, and ${n \choose t}$ by $n^t$,
  $$
  M \geq tlog({nt}) + log(\frac{1}{\epsilon})
  $$
  Replacing $M$ with $P$,
  $$
  P \geq (t+1)!(log(nt) + log(\frac{1}{\epsilon}))
  $$
  Even if $\epsilon$ is inverse-exponential in $n$, $P$ is still
  linear in $n$.
  \qed
\end{proof}

As an example, given a maximum of $n$ threads, at $t=3$,
it suffices to enumerate $(24log(n))$ random permutations of the $n$ numbers to
observe all
$3!$ relative orderings of {\em all} ${n \choose 3}$ subsets with
high probability.
For $n = 600$, we found using
simulations that $70$, $360$ and $2000$ random permutations
were enough to generate all relative orders of
all ${n \choose 3}$ ($t=3$), ${n \choose 4}$ ($t=4$)
and ${n \choose 5}$ ($t=5$) subsets respectively,
with more than 99\% probability.

To generalize to higher context-bounds, we consider a pre-empted thread as
two distinct threads (thread fragments) in this algorithm. Hence, for context-bound
$c$ bugs on a program with at most $n$ threads, we consider $n+c$ distinct thread
fragments.
To cover all $t$-thread bugs at $c$ context-bound, it suffices if
we enumerate all $(t+c)!$ permutations of all
$(t+c)$-sized subsets of the $n+c$ thread fragments. (This is more than what is strictly
required because here we are also enumerating orderings between thread fragments
belonging to the same thread). Hence, using Lemma~\ref{lemma:permutations}, the
number of schedules that need to be executed before all
$t$-thread bugs have been tested at context bound $c$ with high probability
is $O((t+c+1)!log(n+c))$.

To summarize, the exploration algorithm works as follows. A random permutation of
$1,\ldots,(n+c)$ numbers is generated at the start of each execution run.
Let us label the generated permutation $P_1,\ldots,P_{n+c}$. The scheduler
uses strict priority scheduling using $P_1,\ldots,P_n$ as the priorities
of threads $1,\ldots,n$ respectively. On the $i$th pre-emptive context switch, the
priority of the running thread is changed to $P_{n+i}$.
If $(t+c+1)!log(n+c)$ such executions are performed, each time with a new
random permutation, we expect all $t$ thread bugs at context bound $c$ to be
covered with a high probability. (If variable bounding is also being used, then
this is repeated for each set of variables). Notice that
the algorithm is independent of $t$; we only provide
probabilistic guarantees on the absence of bugs with thread-bound less
than $t$ after a certain number of schedules have been executed.

\begin{algorithm}[htb]
\caption{\label{alg:enumeration_pseudo_code}Iterative context bounding algorithm for $t$-thread bugs}
{\bf Input}:  initial state $s_0 \in$ State.
\algsetup{
  linenosize=\small,
  linenodelimiter=
  }
\begin{algorithmic}[1]
\sffamily
  \STATE struct WorkItem \{ State $state$; Priorities $prio$; \}
  \STATE Queue$<$WorkItem$>$ $WorkQueue$;
  \STATE Queue$<$WorkItem$>$ $nextWorkQueue$;
  \STATE WorkItem $w$;
  \STATE Queue$<$Priorities$>$ $threadPrios$;
  \STATE $threadPrios$.init($t$);
  \STATE int $currBound$:= 0;

  \ 

  \FOR {$prio \in threadPrios$}
  \STATE $workQueue$.Add(WorkItem ($s_0$, $prio$));
  \ENDFOR
  \WHILE {$true$}
  \WHILE {$\lnot workQueue$.Empty()}
  \STATE $w$ := $workQueue$.PopFront();
  \STATE Search($w$);
  \ENDWHILE
  \IF {$nextWorkQueue$.Empty() $||$ $currBound == c$}
  \STATE Exit();
  \ENDIF
  \STATE $currBound$ := $currBound$ + 1;
  \STATE $workQueue$ := $nextWorkQueue$;
  \STATE $nextWorkQueue$.Clear();
  \ENDWHILE

  \ 

  \STATE {\bf function} Search(WorkItem w) {\bf begin}
      \STATE WorkItem $x$; State $s$;
      \STATE TID $\textit{effTid}$;
      \STATE bool $tidenabled$, $varaccess$;
      \STATE \textbf{if} w has no successors \textbf{then return};
      \STATE Thread $tid$ := highestPriorityEnabledThread($w$.$prio$);
      \STATE $s$ := $w$.$state$.Execute($tid$);
      \STATE $tidenabled$ := {\tt ($tid \in enabled$($s$))};
      \STATE $varaccess$ := {\tt ($tid$ returned due to varaccess())};
      \STATE $x$ := WorkItem($s$, $w$.$prio$);
      \STATE Search($x$);
      \IF {($tidenabled$ $\&\&$ $varaccess$)}
      \STATE {\tt // pre-emptive cswitch. gen a schedule}
      \STATE $\textit{effTid}$ := effTidOfCurrentThread();
      \STATE changeEffTidOfCurrentThread($\textit{effTid}$+$MaxThreads$);
      \STATE $x$ := WorkItem($s$, $prio$);
      \STATE $nextWorkQueue$.Push($x$);
      \ENDIF
      \STATE \algorithmicend

\normalfont
\end{algorithmic}
\end{algorithm}

\section{Implementation}
\label{sec:implementation}
We implement variable and thread bounding in a concurrency testing tool
for Java, called RankChecker.
RankChecker
instruments the binary class code of a Java program
and associated libraries to insert appropriate schedule points. It does
not require any source-level annotations.
We instrument Java bytecode using the Javassist
library~\cite{chiba:javassist:oopsla98}.
The instrumented
test program is linked with a RankChecker library that implements a scheduler
to dictate the thread interleavings. We
implement static alias analysis using BDDs, similar to that used
in \cite{naik:chord:pldi06, whaley:alias:pldi04}.
Like previous approaches on systematic and probabilistic
testing~\cite{chess, pct}, the program under test is required
to be terminating, so that it can be run repeatedly to explore
different schedules. It is usually straightforward to convert a non-terminating
program to a terminating program.

We implement two different algorithms: exhaustive and randomized. The
exhaustive algorithm searches the state space of all schedules systematically.
The randomized algorithm searches the state space randomly, with probabilistic
guarantees on the probability of finding a bug of certain type (e.g., depth).

We first discuss the implementation of the exhaustive search strategy.
The pseudo-code is shown in Algorithm~\ref{alg:enumeration_pseudo_code}.
The algorithm is invoked for each set of variables (determined using
variable bounding). For each set of variables, a set of thread priority
orders $threadPrios$ are generated and executed. Strict
priority scheduling is followed (line~28) and priorities are
changed at variable accesses using the thread bounding algorithm (line~37).

A program state $s$ is identified by the partial thread schedule
that was executed. We implement a simple record-replay mechanism, whereby a
thread schedule is recorded and later replayed to reconstruct the same
state. As noted in \cite{chess}, replays may not result in identical states
due to other sources of non-determinism (e.g., environment, non-deterministic
calls, etc.). Our current implementation deals with these issues by enforcing
a deterministic input at all these non-deterministic points through
bytecode instrumentation.

We instrument the target program separately for each subset of
variables being tracked. For a fixed $(v,t)$ value, the enumeration
algorithm iteratively explores the
schedules with context bound $0,1,\dots,c$ ($c$ is the maximum desired
context-bound value). While enumerating schedules
for context bound $currBound$, schedules are generated for
context-bound $currBound+1$.
Our algorithm is very similar to that presented in
\cite{musuvathi:icb:pldi07}, with the following differences:
\begin{enumerate}
  \item The instrumented program points
    include memory accesses to the variables being tracked, and not just
    explicit synchronization points. As we show later, variable-bounding allows
    us to do this without significant increase in running times. Each instrumented
    program point yields control to our scheduler.
  \item When a thread yields control to the scheduler ({\tt line 29}), the
    address of the currently accessed variable is compared with the set
    of variables being
    tracked (variable bounding). Recall that it is possible that even
    though the variable access is instrumented, the accessed variable
    does not belong to the set of variables being tracked. This can happen
    either due to the imprecision of the static alias analysis or in
    cases where multiple variables are allocated by the same heap-allocation
    statement. If the accessed variable belongs to the set of variables being
    tracked, the priority of the executed thread is re-assigned, as
    discussed in Section~\ref{sec:threadbound}.
\end{enumerate}

\begin{sloppypar}
We also instrument all entries and exits from {\tt synchronized} blocks, calls
to {\tt wait} and {\tt notify}, and other thread library functions
like {\tt Thread.create}, {\tt Thread.join},
{\tt Thread.yield}, {\tt Thread.suspend} and {\tt Thread.resume}. We replace
all synchronization function calls with
calls to the appropriate scheduler functions, through instrumentation. The scheduler
function
emulates the requested operation and returns to the enumeration
algorithm (at {\tt line 29}). The enumeration algorithm then selects
the highest-priority active thread (which could have changed due to the
synchronization operation) and executes it.
For illustration, Figure~\ref{fig:wait_notify_functions} shows the scheduler's emulation
functions for {\tt wait()} and {\tt notify()}. All calls to {\tt wait()}
and {\tt notify()} in the target program are
replaced with calls to {\tt wait\_s()} and {\tt notify\_s()} respectively.
\end{sloppypar}

\begin{figure}[htb]
{\tt
  void wait\_s(cond, mutex) \{\\
  \begin{tabular}{@{}p{0.3cm}@{}l}
  & curthread.waitingOn = cond;\\
  & curthread.status = BLOCKED;\\
  & add\_to\_blocked\_threads(curthread);\\
  & wakeup\_threads\_blocked\_on(mutex);\\
  & {\em return to scheduler}\\
  \end{tabular}\\
  \}\\\\
  void signal\_s(cond, mutex) \{\\
  \begin{tabular}{@{}p{0.3cm}@{}l}
  & wakeup\_threads\_blocked\_on(cond);\\
  & {\em return to scheduler}\\
  \end{tabular}\\
  \}\\
}
\caption{\label{fig:wait_notify_functions}The scheduler's {\tt wait()} and
{\tt notify()} functions}
\end{figure}

All program instructions, where one of the
variables being tracked is
accessed, are also instrumented with a call to scheduler function
{\tt varaccess()}. The
{\tt varaccess()} function
simply returns to the enumeration algorithm (at {\tt line 29}).
The instructions that could potentially access a tracked variable are
identified using static alias analysis.

Here, we also point out that our definition of {\em context-bound} differs
from previous work~\cite{musuvathi:icb:pldi07} in a subtle way. While
the previous work counts all pre-emptive context switches towards the
context-bound, we only count the pre-emptive context switches that
violate the current priority order.
For example, in our scheme, it is possible for a low-priority thread
to be pre-empted in
favor of a high-priority thread after thread
creation, even
at $c=0$. We do not count such pre-emptions towards the context-bound.

Usually, priority-based schemes suffer from issues like
priority inversion and starvation. Because we require all our threads
to be terminating, this is not an issue in our implementation.
A priority-based scheme also violates any assumptions
of {\em strong fairness}~\cite{apt:fairness:popl87} which says that
every thread will eventually be run.
As also noted in \cite{chess}, many programs implicitly
make this assumption.
For example, while-flags (or spin-loops) are a common synchronization
construct that assume strong fairness. These loops will never terminate
if the thread that is supposed to set the condition of the loop starves.
CHESS avoids this situation by assuming that a thread yields when it is not
able to make progress, and assigning lower priority to threads calling
{\tt thread\_yield()}. In our enumeration scheme, lowering the priority of
a thread on a call to {\tt yield()} may cause certain schedules to never
get enumerated, because unlike CHESS, we enumerate only a small set
of priority orders among threads (thread-bounding). To guard against
the possibility of infinite loops, we lower the priority of a thread
if we observe that thread to {\tt yield()} more than a 100 times. This
threshold avoids infinite loops, and yet is reasonably large to not cause
interference with our thread-bounding algorithm.

Similar
to CHESS~\cite{chess}, we use
happens-before relations to construct
a happens-before graph to prune the schedules.
The happens-before graph
characterizes the partial order of related operations in a program
execution. The nodes of the happens before graph are the
executed instructions. A happens-before directed edge is drawn between two
instructions iff the two instructions execute in different threads, the
first instruction executes before the second instruction in the
given schedule, and the two instructions access the same variable of
which at-least one access is a write. The pruning is based on the
observation that two schedules with identical
happens-before graphs result in the same program state.
For a given variable
set, if one schedule has an
identical happens-before graph to another previously enumerated
schedule, this schedule (and all its derivative schedules) need not be
enumerated. Pruning is not performed across distinct variable sets.
We note that because our thread-bounding algorithm is randomized, our
exhaustive search algorithm is not strictly exhaustive. But as stated
in Lemma~\ref{lemma:permutations}, the probability
that we have not exhausted the search space can be made arbitrarily small
by executing a sufficiently large number of random priority orders.

\begin{sloppypar}
We also implement a randomized testing algorithm in RankChecker to test
variable and thread bounding. The randomized algorithm simply picks a set
of $v$ variables (globals and heap-allocation statements) randomly, and
then picks priority change points at accesses to these variables. The
values of the maximum number of accesses, $k_{q_1}$,\ldots,$k_{q_Q}$, to
variables, $q_1$,\ldots,$q_Q$ respectively, are estimated by
running the program without priority scheduling multiple times
and counting the average number of accesses to each variable in these runs. The
priority change points are picked uniformly over the interval $[1,k_{q_i}]$.
Our randomized algorithm is modeled after PCT's depth-bounding. The only difference
between our algorithm and PCT is in the assignment of priorities. PCT generates
a set of random priority orders, such that each thread gets to be the lowest
priority thread in at least one of the priority orders. Also, on a priority change
point, PCT decreases the priority of the current thread to become lower than
the priority of all currently-executing threads. Our priority orders are instead
chosen using the thread-bounding algorithm given in Section~\ref{sec:threadbound}.
\end{sloppypar}

\section{Experimental Results}
\label{sec:results}
We perform experiments to answer the following questions:
\begin{itemize}
\item What are the typical values of variable-bound and thread-bound in common
concurrency bugs?
\item What is the runtime improvement due to variable bounding?
\item For exhaustive search strategy, do variable and thread bounding improve the
number of executions required to expose a bug?
\item For randomized search strategy, do variable and thread bounding improve the
number of executions required to expose a bug?
\end{itemize}

We picked a variety of small and large Java programs and one C\# program as test
programs to evaluate our algorithms.
The details of these programs are given in
Table~\ref{tab:benchmarks}. The first 13 programs are from the ConTest Concurrency
Benchmark Suite \cite{eytani:concbench:jconcom07}. All these programs contain a concurrency bug.
The next 8 benchmarks are multi-threaded Java programs commonly used to evaluate
concurrency testing and verification tools. Some of these programs contain
bugs. The last program ({\tt RegionOwnership}) is a C\# program containing a
reasonably complex concurrency bug. This program has been previously analyzed using
CHESS \cite{EmmiQR11}.
As we discuss later, we have also implemented
variable and thread bounding in the CHESS tool to test C\# programs. We report our
experiences with variable bounding on the {\tt RegionOwnership} benchmark.
Within a variable and thread bound, we further rank our schedules
based on the loop iteration number (recall Section~\ref{sec:varbound}).
For exhaustive search experiments, while choosing our variable set, we give
priority to shared variables. i.e., variables known to be shared are chosen
before
other variables. A variable is known to be shared if in one of the preparatory runs,
we found a variable being accessed by at least two threads.

We ran RankChecker on the programs containing known bugs
with variable bounding to check the
bug characteristics. Table~\ref{tab:benchmarks} lists the $(c,v,t)$ values at
which these bugs were uncovered using the exhaustive algorithm. We found that
all these bugs were $c\leq2,v\leq2,t=2$ bugs. We also surveyed past
papers on studying concurrency
bugs and bug databases of popular applications, to study
the bugs reported in them. We found that all these bugs
were also of type $c\leq2,v\leq2,t=2$.

We provide pseudo-code of the $c=2,v=1,t=2$ bug found in {\tt AllocationVector} in
Figure~\ref{fig:allocation_vector}.
\begin{figure}[htb]
{\small
\begin{tabular}{@{}l@{}|@{}l@{}}
                                  &    {\tt \ Block b = FindFreeBlock();}\\
{\tt Block b = FindFreeBlock();}  &    {\tt \ } {\em first context switch}\\
{\tt ASSERT(IsBlockFree(b));}     &    \\
{\tt MarkBlockAllocated(b);}      &\\
{\em second context switch}       &    {\tt \ ASSERT(IsBlockFree(b));}  !FAILS!\\
                                  &    {\tt \ MarkBlockAllocated(b);}\\
                                  &    {\tt \ FreeAllBlocks();}\\
{\tt FreeAllBlocks();}            &\\
\end{tabular}
}
\caption{\label{fig:allocation_vector}Pseudo-code showing the $c=2,v=1,t=2$ bug
in AllocationVector. The routines {\tt FindFreeBlock()}, {\tt MarkBlockAllocated()},
and {\tt IsBlockFree()} are all synchronized (i.e., protected by a monitor lock).
{\tt FindFreeBlock()} searches a global vector to find an unallocated block.
{\tt MarkBlockAllocated()} sets a flag in block {\tt b} and {\tt IsBlockFree()} checks
that flag.}
\end{figure}

\begin{table*}[htb]
{
  \begin{center}
    \scriptsize
    \setlength{\tabcolsep}{2pt}
    \begin{tabular}{l|r|r|r|r|l|r|r}
      Benchmark & SLOC & \# Threads & \# Variables & Bug? & Description & Schedules & $(c,v,t)$ \\
                &      &            &              &      &             & Explored &           \\
      \hline
      \hline
      \multicolumn{6}{l}{ ConTest Benchmarks }\\
      \hline
      MergeSort & 376 & 100 & 564 & Yes & Sorts a set of integers using mergesort& 651 & $(1,1,2)$\\
      \hline
      Producer Consumer & 279 & 7 & 61 & Yes & Simulates producer-consumer behavior& 1 & $(0,0,2)$\\
      \hline
      LinkedList & 420 & 3 & 60 & Yes & LinkedList's implementation with test-harness& 23 & $(1,1,2)$\\
      \hline
      BubbleSort & 365 & 9 & 54 & Yes & Sorts a set of integers using bubblesort& 1 & $(0,0,2)$\\
      \hline
      BubbleSort2 & 129 & 101 & 105 & Yes & Sorts a set of integers using bubblesort& 2 & $(0,0,2)$\\
      \hline
      Piper & 210 & 9 & 33 & Yes & Manages airline reservations& 64 & $(1,1,2)$\\
      \hline
      Allocation Vector & 288 & 3 & 4010 & Yes & Manages free and allocated blocks in a vector& 113 & $(2,1,2)$\\
      \hline
      BufWriter & 259 & 5 & 27 & Yes & Reads and writes to a buffer concurrently& 12 & $(0,0,2)$\\
      \hline
      PingPong & 276 & 18 & 25 & Yes & Simulates the behavior of ping-pong game& 234 & $(1,1,2)$\\
      \hline
      Manager & 190 & 6 & 25 & Yes & Manages free and allocated blocks& 33 & $(1,1,2)$\\
      \hline
      MergeSortBug & 258 & 29 & 52 & Yes & Sorts a set of integers using mergesort& 89 & $(1,1,2)$\\
      \hline
      Account & 169 & 3 & 26 & Yes & Manages a bank account& 19 & $(1,1,2)$\\
      \hline
      AirLineTickets & 99 & 11 & 5 & Yes & Simulates selling of airline tickets& 2 & $(0,0,2)$\\
      \hline
      \hline
      \multicolumn{6}{l}{ Java's Library in JDK 1.4.2}\\
      \hline
      HashSet & 7086 & 200 & 4777 & Yes & Thread-safe implementation of HashSet& 813 & $(1,1,2)$\\ 
      \hline
      TreeSet & 7532 & 200 & 6140 & Yes & Thread-safe implementation of TreeSet& 813 & $(1,1,2)$\\
      \hline
      \hline
      \multicolumn{6}{l}{ Other Java Benchmarks}\\      
      \hline
      Cache4j & 3897 & 12 & 251,469 & No & Cache implementation for Java objects& - & - \\
      \hline
      Molydn & 1410 & 8 & 121,371 & No & Benchmark from Java Grande Forum& - & -  \\
      \hline
      Montecarlo & 3630 & 8 & 452,700 & No & Benchmark from Java Grande Forum& - & - \\
      \hline
      TSP & 719 & 18 & 84 & No & Travelling Sales Problem's implementation& - & - \\
      \hline
      Blocking Queue & 57 & 3 & 38,828 & No & Tests BlockingQueue library implementation& - & - \\
      \hline
      Sor & 17,738 & 6 & 53 & No & Successive Order Relaxation method's implementation& - & - \\
      \hline
      \hline
      \multicolumn{6}{l}{ C\# Benchmark}\\
      \hline
      RegionOwnership & 1500 & 5 & 41 & Yes & Manages coordination for objects & 47248 & $(2,2,2)$\\
      & & & & & communicating using async calls & & \\
      \hline
      
    \end{tabular}
  \end{center}
\caption{\label{tab:benchmarks}Test programs and their details. The last two columns list, for each buggy program,
the number of schedules explored until we found the first bug and tuple $(c,v,t)$ at which the bug occurs.}
}
\end{table*}

We next discuss the improvements in running time due to variable bounding.
Table~\ref{tab:pref_results} shows our results on some of our Java programs. The
other Java programs were too small to show any meaningful improvements. The runtime
statistics have been averaged over several runs of the programs.
With variable bounding, there is up to 100x improvement in the runtime cost of
instrumentation. The runtime improvement depends on the proportion
of computation and I/O
in the test program. Variable bounding results in
improvement because only program statements
identified
by alias analysis as potential accesses to our set of tracked variables need
to be
instrumented. The performance of an instrumented run is now comparable to that
of a native run, which makes it practical to implement systematic
testing algorithms where all variables are considered
as potential pre-emptions points. (The native run is
sometimes slower than the instrumented run; this happens due to the
overhead of process creation in the native run which does not
exist in our instrumented run.). This is a significant improvement over
previous work, where only synchronization operations have been considered as
potential pre-emption points~\cite{chess,pct}.
\begin{table*}[htb]
\begin{center}
\scriptsize
\begin{tabular}{l|r|r|r|r|r|r|r|r|r}
	Program Name & BCI & var sites & \# of accesses & Native time(sec) & v$0$(sec) & v$1$(sec) & v$2$(sec) & v-$all$(sec) & v-$all$/v$1$ \\
    \hline
    \hline
    Cache4j & 231.1m & 101 & 21.4m & 0.34 & 0.47 & 1.23 & 2.76 & 26.38 & 21.3 \\
    \hline
    Molydn & 2.33b  & 209 & 1.4b & 0.39 & 3.15 & 11.59 & 19.86 & 1239.76 & 106.3 \\
    \hline
    Montecarlo & 577.7m & 235 & 446.96m & 0.48 & 1.94 & 4.74 & 5.21 & 323.12 & 68.08 \\
    \hline
    TSP & 8.76b & 65 & 2.55b & 4.2 & 4.23 & 32.64 & 109.72 & 1180.28 & 36.15 \\
    \hline
    Blocking Queue & 3.4m & 13 & 0.65m & 0.17 & 0.18 & 0.194 & 0.202& 1.386 & 7.14 \\
    \hline
    Sor & 0.2m & 46 & 0.68m & 0.07 & 0.25 & 0.249 & 0.348 & 0.392 & 1.57 \\
    \hline    
    HashSet & 157.4k & 137 & 16889 & 0.07 & 0.0775 & 0.0901 & 0.0976 & 0.2687 & 2.98 \\ 
    \hline
    TreeSet & 113k & 146 & 16273 & 0.69 & 0.078 & 0.089 & 0.09 & 0.259 & 2.91 \\
    
\end{tabular}
\end{center}
\caption{\label{tab:pref_results}The different columns in this table represents 
the name of the program, the (average) number of byte code instructions executed by the program, total number of different instrumentation sites, 
which includes heap-allocation statements and global variables, total number of accesses, native execution time, the average amount of time 
taken for one execution when we are tracking 0, 1, 2, and all variables, respectively, and the last column represents the ratio of the v-$all$ and v$1$ columns.}
\end{table*}

Previously, a tool called RaceFuzzer~\cite{sen:racefuzzer:pldi08} reported a
$c=1,v=1,t=2$ concurrency bug (data race) in {\tt cache4j}. Our tool could not find
this bug even after exhaustively enumerating all schedules up to $c\leq2,v\leq2,t=2$. On
deeper
inspection, we found that the bug did not exist.
It turned out that RaceFuzzer had generated a false bug report
due to an error in
the modelling of the semantics of the Java interrupt exception in the tool. We
reported this to the author of RaceFuzzer~\cite{sen:racefuzzer:pldi08}, and he
did not object to our findings.
Because RankChecker
actually runs a schedule to try and trigger assertion failures,
a bug report and the associated schedule reported by it also serve as a
proof of the bug's existence.

\subsection*{Variable and Thread Bounding in CHESS}

We further validate the effectiveness of variable and thread
bounding in practice by implementing
it inside CHESS\cite{chess} and testing it on C\# benchmarks that were
previously
used with CHESS \cite{EmmiQR11}. However, we did not have an alias analysis
readily available for
C\#, thus, we only implemented a simple form of variable bounding that works as follows.
Let VT-CHESS refer to our extension of CHESS with variable bounding.
Suppose VT-CHESS is executed on program $P$ with variable bound $v$ and pre-emption bound $c$. If $v \geq c$
then VT-CHESS behaves exactly like CHESS. When $v < c$, then during an execution of
$P$, VT-CHESS
records the shared variables accessed just before the first $v$ pre-emptions in the
execution. Subsequent
pre-emptions ($v+1^\text{th}$ to $c^\text{th}$) are constrained to occur only after an access of one of 
these $v$ variables. In other words,
the $v$ variables for variable bounding are chosen dynamically.

The deepest reported bug found using CHESS is in a program called {\tt RegionOwnership}. It is a C\# library
that manages concurrency and coordination for objects communicating via asynchronous procedure
calls. The library is accompanied by a single test case comprising of a
one-producer one-consumer system. The library is $1500$ lines of code, and an execution access a synchronization
variable at most $280$ times. The test reveals a bug that requires at least $3$ pre-emptions. 

Table~\ref{tab:vb-chess} shows the number of executions and time taken before VT-CHESS either reported a bug
or finished exploring all behaviors under the given bound. We used $c=3,t=2$ in all invocations of VT-CHESS.
VT-CHESS was able to find the bug about $6$ times 
faster then CHESS while using a variable bound of $2$. Using a variable bound of $1$ does not expose the bug, 
but Table~\ref{tab:vb-chess} shows a further reduction in search space when this bound is imposed.

\begin{table}[htb]
{\scriptsize
  \begin{center}
  \setlength{\tabcolsep}{2pt}
  \begin{tabular}{|l|r|r|r|r|}
    \hline
     & Bug found? & \# Executions & Time (sec) \\
    \hline
    No VB,$t=2$ & Yes & 132507 & 6897.3 \\
    $v=2$,$t=2$  & Yes & 47248  & 1224.4 \\
    $v=1$,$t=2$  & No & 30437  & 581.0 \\
    \hline
  \end{tabular}
\end{center}
\caption{\small \label{tab:vb-chess}Experiments with the {\tt RegionOwnership} benchmark.}
}
\end{table}

\subsection*{Variable and Thread Bounding in Randomized Algorithms}
All the bugs found in our test programs, except {\tt RegionOwnership}, were of type $v\leq1$.
As seen in Tables~\ref{tab:pref_results}~and~\ref{tab:vb-chess}, variable
bounding improves both runtime and the number of schedules explored while
systematically testing concurrent programs. To further study the effect
on bugs with higher $v,t$ values, we modified one of our test programs such
that it had a bug of the required type and ran RankChecker on it.
Table~\ref{tab:montecarloresults} presents our results.

As
expected, the time required to find the bug decreases dramatically with variable
bounding. The number of executions required to find a $v=0$ bug is
roughly the same with and without variable bounding, but increases with the thread-bound
of the bug.
The number of executions required to find the bug improves
with variable bounding at $v\geq1$, for the reasons discussed in
Section~\ref{sec:varbound-random}.

\begin{table}[htb]
{\scriptsize
  \begin{center}
  \setlength{\tabcolsep}{2pt}
  \begin{tabular}{|l|r|r|r|r|}
    \hline
    Bug Type & \multicolumn{2}{|c|}{With v,t bounding} & \multicolumn{2}{c|}{Without v,t bounding} \\
    $(c,v,t)$ & \# Executions & Time (sec) & \# Executions & Time (sec) \\
    \hline
    \hline
    $(0,0,2)$ & 3.1 & 10.9 &  2.7 & 768.9 \\
    \hline
    $(0,0,3)$ & 3.7 & 12.1 &  3.5 & 1010.7 \\
    \hline
    $(0,0,4)$ & 4.9 &  14.9 &  3.8 & 1101.7 \\
    \hline
    $(1,1,2)$ & 1636.2 & 6409.9 & - & TimedOut \\
    \hline
    $(1,1,3)$ & 4889 & 20371.2 & - & TimedOut \\
    \hline
    $(2,1,2)$ & 28121 & 112950.3 & - & TimedOut \\
    \hline
  \end{tabular}
\end{center}
}
\caption{\small \label{tab:montecarloresults}This table represents the average number
of executions and time required to capture a bug of type $(c,v,t)$ introduced
in the Montecarlo benchmark. Without variable bounding, our tool
timed out after executing for more than 3 days for $c\geq1,v\geq1$, without
finding the bug.}
\end{table}

\section{Related Work}
\label{sec:relwork}
There is a large body of work on
static~\cite{bessey:coverity:acmcomm10, naik:chord:pldi06} and
dynamic~\cite{savage:eraser:tcs97, elmas:goldilocks:pldi07,
joshi:deadlockfuzzer:pldi09, lu:avio:asplos06,
ctrigger} techniques to uncover
concurrency bugs. While many of these
techniques are very effective and have uncovered a variety of
previously-unknown bugs in well-tested software, the current dominant
practice in the software industry still remains {\em stress testing}.
There are a few important likely reasons for
this (apart from plain inertia):
\begin{enumerate}
  \item Plain testing is more natural.
  \item Most tools target a small class of bugs. For example, some tools
    target only dataraces, others target only atomicity-violation bugs, and
    yet others target only deadlocks. It is confusing for a developer to
    understand the function of each tool and apply them separately.
  \item Many tools have false positives. Spending time and energy on
    a false-positive bug report is annoying and counter-productive.
  \item Many tools require source-level annotations. Many tools rely on
    certain programming disciplines. e.g., all shared memory accesses
    must be protected by a lock. At places where the programmer deliberately
    violates this discipline, source code annotations are required.
    Most developers are usually reluctant to annotate their source code for
    better testing.
  \item High-Runtime, Low Coverage: Many tools have a high runtime cost, and
    provide low coverage.
\end{enumerate}

Model checking approaches
\cite{bruening:systematic_testing, chess,
godefroid:verisoft:book96, godefroid:verisoft:popl97,
stoller:modelcheck:spin00, khurshid:pathfinder:tacas03} are
closer
to the familiar idiom of testing. Our
model checker targets all types
of concurrency bugs, has no false positives, and requires
no source-level annotations.
We address the state explosion problem
by ranking the schedules using $v,t$ to
maximize coverage in the first few schedule executions.
Previous approaches have reduced this search space by either
limiting context switches only at synchronization
operations~\cite{chess} (resulting in potential false negatives), or
using an offline memory trace of the program to identify and rank
unserializable interleavings~\cite{ctrigger} (primarily to identify
atomicity-violation bugs). We believe that variable and thread bounding are
more general methods of ranking (or reducing) the search space.

CHESS~\cite{chess} uses iterative context bounding to 
rank schedules. We borrow many ideas from CHESS, including
iterative context bounding~\cite{musuvathi:icb:pldi07}, using
a happens-before graph for stateless
model checking, and fair scheduling. We provide further ranking
of schedules to uncover most bugs with a smaller number of schedules. While
we consider all shared memory accesses as potential context-switch
points, CHESS only allows pre-emptible context switches at
explicit synchronization primitives.
This restriction (first used in ExitBlock~\cite{bruening:systematic_testing}) is
justified if
all shared memory accesses are
protected by explicit synchronization (e.g., {\tt lock/unlock}).
CHESS relies on a data-race detector to separately check this property.
Even if we assume a precise and efficient data-race detector, this approach
still overlooks ``adhoc'' synchronization that do not involve
known synchronization primitives~\cite{xiong:adhoc:osdi10}.

VeriSoft~\cite{godefroid:verisoft:book96, godefroid:verisoft:popl97, godefroid:verisoft:issta98} also uses an
exploration strategy to model-check ``distributed'' systems
using a {\em state-less search} (i.e., storage of previously-visited
states are not required). Verisoft uses partial-order
methods to reduce redundancy, similar to happens-before graph pruning
used in CHESS and in our tool. S. Stoller~\cite{stoller:modelcheck:spin00}
uses a similar approach to model-check multi-threaded distributed Java
programs. We believe that variable and thread bounding ideas are equally relevant
to these model checking approaches as well.

Our work is complementary to
race-detection tools~\cite{savage:eraser:tcs97, naik:chord:pldi06,
sen:racefuzzer:pldi08, yu:racetrack:sosp05, marino:literace:pldi09},
deadlock-detection tools~\cite{joshi:deadlockfuzzer:pldi09}, and
atomicity-violation detection tools~\cite{lu:avio:asplos06, ctrigger}.
We do not focus on a
particular
class of bugs, but rather drive a model checker into exploring
interesting schedules that are likely to uncover all these bugs early.
In practice, small $v,t$ values uncover most data races, deadlocks, and
atomicity-violation bugs.

\section{Conclusion}
\label{sec:conclusion}
We present variable and thread bounding to rank thread schedules for
systematic testing of concurrent programs. Through experiments on a
variety of Java and C\# programs, we find that the ranking significantly
aids early discovery of common bugs.

\ifthenelse{\boolean{ThisIsTR}}{
\section{Acknowledgements}
Lemma~\ref{lemma:permutations} and its proof are due to Sandeep Sen (IIT Delhi).
}{}

\bibliographystyle{abbrv}
\bibliography{varbounding}
\end{document}